\tikzset{node distance=2cm, auto}
\newtheorem{theorem}{Theorem}
\newtheorem{definition}{Definition}
\newtheorem{proposition}{Proposition}
\newtheorem{remark}{Remark}
\newcommand{\g}{\mathfrak{g}}
\newcommand{\hh}{\mathfrak{h}}
\newcommand{\h}{\mathfrak{h}}
\newcommand{\RR}{\mathbb{R}}
\newcommand{\CC}{\mathbb{C}}
\newcommand{\be}{\begin{equation}}
\newcommand{\ee}{\end{equation}}
\newcommand{\cM}{\mathcal{M}}
\newcommand{\cO}{\mathcal O}
\newcommand{\cP}{\mathcal P}
\newcommand{\cB}{\mathcal B}
\newcommand{\cC}{\mathcal C}
\newcommand{\mb}{\mbox}
\newcommand{\stk}{\stackrel}
\title{Degenerately integrable systems}
\author{Nicolai Reshetikhin}
\address{N.R.: Department of Mathematics, University of California, Berkeley,
CA 94720, USA \& KdV Institute for Mathematics, University of Amsterdam,
Science Park 904, 1098 XH Amsterdam, The Netherlands \& ITMO University, Kronverkskii ave. 49, Saint Petersburg 197101, Russia.}
\email{reshetik@math.berkeley.edu}
\begin{document}

\begin{abstract}
The subject of this paper is degenerate integrability in Hamiltonian mechanics.
It starts with a short survey of degenerate integrability.
The first section contains basic notions.
It is followed by a number of examples which include the Kepler system, Casimir models,
spin Calogero models, spin Ruijsenaars models,
and integrable models on symplectic leaves of Poisson Lie groups. The new results are
degenerate integrability of relativistic spin Ruijsenaars and Calogero-Moser systems
and the duality between them.
\end{abstract}
\maketitle

\section*{Introduction}

Degenerately  integrable systems are also known as superintegrable systems and as noncommutative
integrable systems. We will use
the term "degenerate integrability" to avoid possible confusion with supermanifolds,
Lie superalgebras and supergeometry.

Degenerate integrability generalizes well known Liouville integrability of
Hamiltonian systems on a $2n$-dimensional symplectic manifold
to the case when the dimension of invariant tori is $k< n$. When $k=n$
we have the usual Liouville integrability. First examples were known much earlier, see for example \cite{P}\cite{Pauli}\cite{Fock} \cite{Winter}. The notion of degenerate integrability in its modern form was first introduced in \cite{N}. Then a series of examples related to Lie groups was found in \cite{MF}.

First section is a short introduction to degenerate integrability. The rest of the paper is a collection of
examples of degenerately integrable systems. Section two describes the integrability of the Kepler system, which is the classical counterpart of the Bohr model of the hydrogen atom. Its degenerate integrability can be traced back to \cite{P}\cite{Pauli}\cite{Fock}.
The next series of examples, Casimir integrable systems, is described in section 3. These systems can be regarded as degenerations of Gaudin models. They are important for understanding semiclassical asymptotic of q-6j symbols for simple Lie algebras. In sections \ref{CM} and 5 spin Calogero-Moser systems and rational spin Ruijsenaars system are described. The duality between these systems is explained. This section is a concise version of \cite{R2}.
Spin generalization of the Calogero-Moser system was first found in \cite{GH}. The better title of this section would be spin Calogero-Moser-Sutherland-Olshanetsky-Perelomov systems \cite{C}\cite{M}\cite{Su}\cite{OP}. For Liouville integrability of spin Calogero-Moser systems see \cite{KBBT}\cite{LX}. These integrable systems were studied
quite a lot. For the duality in the non-spin case  see \cite{Ru}\cite{Ne}\cite{ER}\cite{FGNR}.
Duality in the context of superintegrability in the non-spin case was futher explored in \cite{AF},
where it was shown that for spinless scattering systems the duality implies the superintegrability.
Duality between relativistic Ruijsenaars and relativistic Calogero-Moser in the non-spin case was studied in
\cite{FK1}\cite{FK2} in the context of Heisenberg double. Further generalization of
Calogero-Moser systems was suggested in \cite{FP}\cite{FP1}. Section \ref{RCM} contains the
proof of degenerate integrability of the relativistic spin Calogero-Moser systems, of the
relativistic spin Ruijsenaars system and the duality between them. Results from this section seem to be new.
The last sections is based on \cite{R2}. It describes the degenerate integrability of Toda type systems on symplectic leaves of simple Poisson Lie groups with standard Poisson Lie structure. The proof of degenerate integrability in
the linearized case was done in \cite{GS}.

The paper was completed while the author was visiting St. Petersburg, ITMO and POMI.
This visit was supported by the project no. 14-11-00598 funded by Russian Science Foundation.
This work was also supported by the NSF grant DMS-0901431 and by the Chern-Simons endowment.
The author is grateful to G. Schrader and to S. Shakirov for helpful discussions.

\section{Degenerate Integrable systems}

\subsection{Degenerate Integrable systems}
An integrable system on a $2n$ dimensional symplectic manifold is called {\it degenerate}
if all the invariant submanifolds have dimension $k < n$. The nondegenerate case when $k=n$
corresponds to the usual Liouville integrability (non-degenerate case).
We will assume $k\leq n$ and for simplicity, some time we will refer to $k=n$ as a particular case of degenerate integrable systems.

\begin{definition} A {\it degenerate integrable system} on a symplectic
manifold $(\cM_{2n}, \omega)$ consists of a Poisson subalgebra $C_J(\cM_{2n})$ in $C(\cM_{2n})$ of rank $2n-k$
which has a Poisson center $C_I(\cM_{2n})$ of rank $k$.
\end{definition}

A Hamiltonian dynamics generated by the function $H\in C(\cM)$ is said to be degenerately integrable if $H\in C_I(M)$.
If $J_1,\dots, J_{2n-k}$ are independent functions from $C_J(\cM)$, we have
\[
\{H,J_i\}=0, \ \ i=1,\dots, 2n-k .
\]
In other words, functions $J_i$ are integrals of motion for $H$.
One can say that Hamiltonian fields generated by $J_i$ describe the symmetry of
the Hamiltonian flow generated by $H$. In this sense, it is natural to call functions from $C_I(\cM_{2n})$ (Poisson commuting) {\it Hamiltonians}, while functions $C_J(\cM_{2n})$ with be called {\it integrals
of motion} for Hamiltonians.

The level surface $\cM(c_1,\dots,c_{2n-k})=\{x\in
\cM | J_i(x)=c_i\}$ of functions $J_i$ is called {\it generic}, relative
to  $C_I(\cM_{2n})$ with  $k$ independent functions $I_1,\dots,I_k\in C_I(\cM_{2n})$ if the form $dI_1\wedge\dots\wedge
dI_{k}$ does not vanish identically on it. Then the following holds (as shown in \cite{N}):

\begin{theorem}\label{deg}
\begin{enumerate}
\item Flow lines of any $H\in C_I(\cM_{2n})$ are parallel to level surfaces of $J_i$.
\item Each connected component of a generic level surface
has canonical affine structure generated by the flow lines of
$I_1,\dots, I_k$.
\item The flow lines of $H$ are linear in this affine
structure.

\end{enumerate}
\end{theorem}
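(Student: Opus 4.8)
The plan is to work locally near a point on a generic level surface and exhibit explicit coordinates in which the three claims become transparent. First I would fix a generic value $(c_1,\dots,c_{2n-k})$, set $N = \cM(c_1,\dots,c_{2n-k})$, and observe that since $J_1,\dots,J_{2n-k}$ are independent, $N$ is a submanifold of dimension $k$, and since $C_I(\cM_{2n})$ is the Poisson center of $C_J(\cM_{2n})$, every $I_j$ Poisson-commutes with every $J_i$; hence the Hamiltonian vector field $X_{I_j}$ is tangent to every level surface of the $J_i$, in particular to $N$. This already gives claim (1): a Hamiltonian $H\in C_I(\cM_{2n})$ has flow tangent to $N$, so its flow lines stay on the level surfaces of the $J_i$. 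By genericity $dI_1\wedge\dots\wedge dI_k$ is nonzero on $N$, so $X_{I_1},\dots,X_{I_k}$ are linearly independent there and span the tangent space $T_xN$ at each point.

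Next I would check that these $k$ vector fields commute on $N$. Since $\{I_i,I_j\}$ lies in the Poisson center $C_I$ and Poisson-commutes with all $J_\ell$, while also $\{I_i,I_j\}$ is expressible through the $I$'s on the Lagrangian-type leaves, the standard argument is: the functions $I_1,\dots,I_k, J_1,\dots,J_{2n-k}$ are $2n$ functions, the $J$'s have rank $2n-k$ as a Poisson subalgebra, its center has rank $k$, and a dimension count forces $\{I_i,I_j\}$ to vanish on $N$ (it Poisson-commutes with a maximal set and restricted to $N$ has no room to be nonzero). Therefore $[X_{I_i},X_{I_j}] = X_{\{I_i,I_j\}}$ vanishes on $N$, so $X_{I_1},\dots,X_{I_k}$ are $k$ commuting, pointwise-independent vector fields on the $k$-manifold $N$. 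On each connected component this defines a locally free action of $\RR^k$ (an action of $\RR^k/\Lambda$ for a lattice $\Lambda$ when the component is compact, yielding a torus), and the flat connection whose horizontal—here, full—distribution has these as a parallel framing gives the canonical affine structure of claim (2); the affine coordinates are the time parameters $t_1,\dots,t_k$ of the flows of $X_{I_1},\dots,X_{I_k}$, well-defined up to the lattice and an overall translation.

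Finally, for claim (3), write $H = F(I_1,\dots,I_k)$ on a neighborhood of $N$ — possible because $C_I$ has rank $k$ and $I_1,\dots,I_k$ are independent there, so $H$, lying in $C_I$, is functionally dependent on them. Then $X_H = \sum_j \bigl(\partial F/\partial I_j\bigr) X_{I_j}$, and since $H$ commutes with all $J_i$, each coefficient $\partial F/\partial I_j$ is constant along $N$ (it is again a function of the $I$'s, which are constant on $N$). Hence on $N$ the field $X_H$ is a constant-coefficient combination of the parallel framing $X_{I_1},\dots,X_{I_k}$, so in the affine coordinates $t_1,\dots,t_k$ its integral curves are straight lines $t_j(\tau) = t_j(0) + \tau\,\partial F/\partial I_j$. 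I expect the main technical point to be the commutativity $\{I_i,I_j\}|_N = 0$: one must argue carefully from the rank hypotheses on $C_J$ and $C_I$ — rather than from an a priori involutivity assumption — that no nonzero bracket survives on the generic leaf; once that is in hand, everything else is a direct unwinding of the definitions.
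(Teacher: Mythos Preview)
The paper does not give its own proof of this theorem; it attributes the result to Nekhoroshev \cite{N} and states it without argument. Your outline is the standard one and is essentially correct, so there is nothing to compare against except to note that what you wrote is in the spirit of Nekhoroshev's original argument.

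There is, however, one point where you overcomplicate matters and one place where a sentence is actually wrong. By the definition in the paper, $C_I(\cM_{2n})$ is the Poisson \emph{center} of the Poisson subalgebra $C_J(\cM_{2n})$; hence $C_I\subset C_J$, and every element of $C_I$ Poisson-commutes with every element of $C_J$, in particular with every other element of $C_I$. Thus $\{I_i,I_j\}=0$ identically on $\cM_{2n}$, not merely on $N$. This \emph{is} an a priori involutivity assumption built into the definition (the paper even remarks parenthetically that the functions in $C_I$ are ``Poisson commuting''), so the ``main technical point'' you anticipate simply does not arise, and your dimension-count sketch for it is unnecessary. Relatedly, your sentence ``the functions $I_1,\dots,I_k, J_1,\dots,J_{2n-k}$ are $2n$ functions'' is misleading: since $C_I\subset C_J$, each $I_j$ is functionally dependent on $J_1,\dots,J_{2n-k}$, so there are only $2n-k$ independent functions in this list. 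This is precisely why the $I_j$ are constant on the level set $N$, a fact you use correctly later when arguing that the coefficients $\partial F/\partial I_j$ are constant along $N$. With these clarifications your argument for all three claims is clean.
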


When $k=n$ this theorem reduces to the Liouville integrability.
As a consequence, each generic level surface is isomorphic to $\RR^l\times (S^1)^{k-l}$
for some $0\leq l \leq k$.

The notion of degenerate integrability has a simple
semiclassical meaning. In the Liouville integrable systems
when there are $n$ Poisson commuting integrals on a $2n$ dimensional
symplectic manifold the semiclassical spectrum of quantum integrals is either
non-degenerate or has stable degeneracy which is determined by the number of
connected components of fibers in the Lagrangian fibration given by level surfaces of Hamiltonians.

In degenerate integrable systems the semiclassical
spectrum of quantized commuting integrals $I_i$ is expected to be degenerate
with the multiplicity $h^{n-k}vol(p^{-1}(b))(1+O(h))$. Quantization of the Poisson algebra generated by $J_i$ gives the associative algebra, which describes the symmetry of the joint spectrum of quantum integrals.

Geometrically, a degenerate integrable system consists of
two Poisson projections
\begin{equation}\label{psi-pi}
\cM_{2n}\stk{\pi}{\longrightarrow} P_{2n-k}
\stk{p}\longrightarrow B_{k}
\end{equation}
where $P_{2n-k}$ and $B_k$ are Poisson manifolds and $B_k$ has trivial Poisson
structure. In the algebraic setting, $P_{2n-k}$ is the spectrum (of primitive ideals) of $C_J(\cM)$ and
$B_k$ is the spectrum of $C_I(\cM)$. Fibers of $p$ are (possibly disjoint unions of)
symplectic leaves of $P$.

One should emphasize that degenerate integrability is a special structure which
is stronger than Liouville integrability: invariant tori now have dimension $k<n$.
In the extreme case of $k=1$ all trajectories are periodic. A degenerately
integrable system may also be Liouville integrable, as in the case of spinless Calogero-Moser system
\cite{W}.

The projection $p\circ \pi: \cM\to B_k$ defines the mapping of tangent bundles
$d(p\circ \pi): T\cM\to TB_k$. This gives the distribution
\[
D_B=\omega^{-1}(ker(d(p\circ \pi))^{\perp})\subset T\cM
\]
where the symplectic form $\omega$ is regarded as an isomorphism $T\cM\simeq T^*\cM$
and $ker(d(p\circ \pi))^{\perp}\subset T^*\cM$ is the subbundle orthogonal to $ker(d(p\circ \pi))\subset T\cM$.

\begin{proposition} Leaf of $D_B$ through $x\in \cM$ coincides with $\pi^{-1}(\pi(x))$.
\end{proposition}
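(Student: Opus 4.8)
The plan is to identify the distribution $D_B$ explicitly with the span of the Hamiltonian vector fields of the commuting Hamiltonians $I_1,\dots,I_k$, and then to recognize that span as the tangent distribution to the fibers of $\pi$. First I would work on the open dense subset of $\cM$ on which $d(p\circ\pi)$ has maximal rank, so that $\ker(d(p\circ\pi))\subset T\cM$ is the tangent bundle to the fibers of $p\circ\pi$ and its annihilator $\ker(d(p\circ\pi))^{\perp}\subset T^*\cM$ is the conormal bundle to those fibers. Since $C_I(\cM)=(p\circ\pi)^*C(B_k)$, this conormal bundle is spanned pointwise by the differentials $dI$, $I\in C_I(\cM)$, equivalently by $dI_1,\dots,dI_k$ wherever these are independent, which is precisely the genericity hypothesis underlying Theorem \ref{deg}. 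Applying the isomorphism $\omega^{-1}\colon T^*\cM\to T\cM$ then gives
\[
D_B=\mathrm{span}\{\,X_I:=\omega^{-1}(dI)\ \mid\ I\in C_I(\cM)\,\},
\]
a distribution of rank $k$ on the generic locus.

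Next I would prove the inclusion $D_B\subseteq\ker d\pi$. The fibers of $\pi$ are the common level sets of all functions in $C_J(\cM)=\pi^*C(P_{2n-k})$, so a vector field is tangent to them iff it annihilates every $J\in C_J(\cM)$. For $I\in C_I(\cM)$ we have $X_I(J)=\{I,J\}=0$ for all $J\in C_J(\cM)$, since $C_I(\cM)$ is by definition the Poisson center of $C_J(\cM)$; hence $X_I$ is tangent to the fibers of $\pi$ and $D_B\subseteq\ker d\pi$. (As a sanity check one also gets that $D_B$ is involutive: $C_I(\cM)\subseteq C_J(\cM)$ together with centrality makes $C_I(\cM)$ Poisson-commutative, so $[X_I,X_{I'}]=\pm X_{\{I,I'\}}=0$, and $D_B$ is integrable by Frobenius — though this will in fact follow for free from the next step.)

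Finally a dimension count closes the argument: $\pi\colon\cM_{2n}\to P_{2n-k}$ is a submersion with $k$-dimensional fibers, so $\ker d\pi$ has rank $k=\operatorname{rank}D_B$ on the generic locus; two integrable distributions of equal rank, one contained in the other, must coincide. Therefore the leaf of $D_B$ through a generic $x$ is the connected component of $\pi^{-1}(\pi(x))$ containing $x$, and under the standing genericity assumptions (the same ones in Theorem \ref{deg}) this is $\pi^{-1}(\pi(x))$. I expect the only real obstacle to be the bookkeeping around genericity: $D_B$ is only defined where $\ker(d(p\circ\pi))$ has constant rank, and both the identification of the conormal bundle with $\mathrm{span}\{dI_j\}$ and the rank equality require restricting to the open dense set where $dI_1\wedge\dots\wedge dI_k\neq0$; one should also note that the identification "leaf $=$ fiber" is literal only when the fibers of $\pi$ are connected, and is otherwise an equality of connected components.
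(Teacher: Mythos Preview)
Your proof is correct. The paper actually states this proposition without proof, so there is no argument to compare against; your approach --- identifying $\ker(d(p\circ\pi))^\perp$ with $\mathrm{span}\{dI_1,\dots,dI_k\}$, applying $\omega^{-1}$ to obtain the Hamiltonian vector fields $X_{I_j}$, observing that $\{I,J\}=0$ forces $X_I\in\ker d\pi$, and then matching ranks --- is the standard and natural one, and your caveats about genericity and connectedness of fibers are exactly the right qualifications to make.
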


We will say that two degenerate integrable systems $(\cM,P, B)$ and $(\cM', P', B')$
are {\it spectrally equivalent} if there is
a collection of mappings
\begin{itemize}
\item $\phi: \cM\to \cM'$, a mapping of Poisson manifolds,
\item $\phi_1: P\to P'$, a mapping of Poisson manifolds,
\item $\phi_2: B\simeq B'$, a diffeomorphism.
\end{itemize}
such that the following diagram is commutative
\begin{equation}\label{deg-int-equivalence}
\begin{tikzpicture}
\node (M) {$\cM$};
  \node (P) [below of=M] {$P$};
  \node (B) [below of=P] {$B$};
  \node (N) [right of=M] {$\cM'$};
  \node (C) [right of=P] {$P'$};
  \node (D) [right of=B] {$B'$};
  \draw[->] (M) to node {$\pi$} (P);
  \draw[->] (M) to node {$\phi$} (N);
  \draw[->] (N) to node {$\pi'$} (C);
  \draw[->] (P) to node {$\phi_1$} (C);
  \draw[->] (P) to node {$p$} (B);
  \draw[->] (C) to node {$p'$} (D);
  \draw[->] (B) to node {$\phi_2$} (D);
\end{tikzpicture}
\end{equation}
Note that the mappings $\phi$ and $\phi_1$ may not be diffeomorphisms.
If they are diffeomorphisms then the systems are called {\it equivalent
degenerately integrable systems}.

\subsection{Degenerately integrable systems and Lagrangian fibrations}
When we have an integrable system on a $2n$-dimensional symplectic manifold $\cM$, which is a Lagrangian fibration
$\tilde{\pi}:\cM\to \cB$,  and a projection  $\pi_2:\cB\to B$ of this fibration to a $k$-dimensional manifold $B$ with $k<n$,
we can construct $P$ as the space of leaves of of the tangent distribution
\[
D=ker(d(\pi_2\circ \tilde{\pi}))^{\perp, \omega}
\]
Here $dp: TM\to TN$ is the differential of $p: M\to N$ and $V^{\perp, \omega}\subset W$ is the
subspace of a symplectic space $W$ which is symplectic orthogonal to $V$.

From $P=\cM/D$ we have a natural projection to $B$ and a natural projection on $\cB$.
If $P$ is smooth, i.e. if the distribution $D$ is integrable, we have a degenerate integrable
system and a commutative diagram (\ref{dint-int}).

\begin{equation}\label{dint-int}
\begin{tikzpicture}
\node (M) {$\cM$};
  \node (P) [below of=M] {$P$};
  \node (B) [below of=P] {$B$};
  \node (C) [right of=P] {$\mathcal{B}$};
  \draw[->] (M) to node {$\pi$} (P);
  \draw[->] (M) to node {$\tilde{\pi}$} (C);
  \draw[->] (P) to node {$\pi_1$} (C);
  \draw[->] (P) to node {$p$} (B);
  \draw[->] (C) to node {$\pi_2$} (B);
\end{tikzpicture}
\end{equation}
An example of such system is the (spinless) Calogero-Moser system \cite{W}.
For other examples see \cite{AF}\cite{FK1}.

\subsection{Action-angle variables}
Degenerate integrable systems admit action-angle variables, see \cite{N}.

For a generic point $c\in P_{2n-k}$ the level
surface $\pi^{-1}(c)$ admits angles coordinates $\varphi_i$. This is an affine coordinate
system generated by the flow lines of Hamiltonian vector fields of integrals $I_1,\dots I_k$ \cite{N}.
In a tubular neighborhood of $p^{-1}(c)$ the symplectic form $\omega$ on $\cM$ can be written as
\[
\omega=\omega_c+\sum_{i=1}^k d\varphi_i\wedge dI_i,
\]
where $\omega_c$ is the symplectic form on the symplectic leave through $c$ in $P_{2n-k}$.

The rest of the paper will focus on specific examples of degenerately integrable
systems.

\section{Kepler system}In this case the phase space is $M=\RR^6$ with coordinates, $p_i, q_i, i=1,2,3$
and with symplectic form
\[
\omega=\sum_{i=1}^3 dp_i\wedge dq^i
\]
The Hamiltonian is
\[
H=\frac{1}{2}p^2-\frac{\gamma}{|q|}
\]
The non-commutative Poisson algebra of integrals is generated by
momenta $M_i$ and components of the Lenz vector $A_i$:
\[
M_1=p_2q^3-p_3q^2, \ \ M_2=p_3q^1-p_1q^3, \ \ M_3=p_1q^2-p_2q^1
\]
\[
A_1=p_2M_3-p_3M_2+\gamma\frac{q^1}{|q|}, \ \ A_2=p_3M_1-p_1M_3+\gamma\frac{q^2}{|q|}, \ \ A_3=p_1M_2-p_2M_1+\gamma\frac{q^3}{|q|}
\]
In vector notations $M=p\times q$ and $A=p\times M+\gamma\frac{q}{|q|}$. Components of $M$ and $A$ have the following Poisson brackets:
\begin{equation}\label{P-brak}
\{M_i, M_j\}=\varepsilon_{ijk}M_k, \ \ \{M_i, A_j\}=\varepsilon_{ijk}A_k, \ \ \{A_i, A_j\}=-2H\varepsilon_{ijk}M_k
\end{equation}
\[
\{H, M_i\}=\{H, A_i\}=0
\]
The momentum vector $M$ and the Lenz vector $A$ satisfy extra relations
\begin{equation}\label{pol-rel}
(M,A)=0, \ \ (A,A)=\gamma^2-2(M,M)H
\end{equation}

Denote by $P_5$ the 5-dimensional Poisson manifold which is a real affine algebraic submanifold in $\RR^7$
with coordinates $M_i, A_i, H$ defined by relations (\ref{pol-rel}) and with Poisson brackets (\ref{P-brak}).

Formulae for $M$, $A$, and $H$ in terms of $p$ and $q$ coordinates describe the Poisson projection $\RR^6\to P_5$.
The following describes level surfaces of $H$ in $P_5$.

The level surface $H=E<0$ is the coadjoint orbit $O_{-E}\subset so(4)^*$. This orbit is isomorphic to $S^2\times S^2$
where each $S^2$ has radius $\gamma/\sqrt{2|E|}$ and $S^2\times S^2$ is naturally embedded into $so(3)^*\times so(3)^*\simeq \RR^3\times \RR^3$. We used the natural isomorphism $so(4)^*\simeq so(3)^*\times so(3)^*$,
where left and right $so(3)^*$ components are given by $L_i=M_i-\frac{A_i}{\sqrt{2|E|}}$ and $R_i=M_i+\frac{A_i}{\sqrt{2|E|}}$.

The level surface $H=0$ is coadjoint orbit in $e(3)^*$ which is isomorphic to $TS^2$ and the sphere has
radius $\gamma$, $(A,A)=\gamma^2$.

The  level surface $H=E>0$ is the hyperboloid $O_E$ which is the coadjoint orbit in $so(3,1)^*$
with natural coordinates $M$ and $B=\frac{A}{\sqrt{2E}}$ and with Casimir functions $(M,B)=0$ and $(B,B)-(M,M)=\gamma^2$.

All of these level surfaces are symplectic manifolds, and we just described symplectic leaves
of the Poisson manifold $P_5$.

This structure correspond to the following sequence of Poisson maps:
\[
\RR^6 \to  P_5 \to \RR
\]
where
\begin{equation}\label{P-Kepl-sl}
P_5\simeq \sqcup_{E<0} S^2\times S^2 \sqcup_{E=0} TS^2 \sqcup_{E>0}  O_E
\end{equation}
The first projection is the map $(p,q)\to (M(p,q), A(p,q), H(p,q))$ and the second
one projects $P_5$ to the $E$-axis.

\section{Casimir integrable systems}
\subsection{Casimir integrable systems}
In this section, $G$ is a complex algebraic group and $\g$ is its Lie algebra.
The phase space of the complex algebraic Casimir system is the Hamiltonian reduction of
the product of coadjoint orbits $\cO_1\times \dots\times \cO_n$
\[
\cM_{\cO_1, \dots, \cO_n}=\{(x_1,\dots, x_n)\in \cO_1\times \dots\times \cO_n|x_1+\dots + x_n=0\}/G
\]
Here we assume that each orbit is regular (passes through a regular element of $\hh^*$).

The coadjoint action of the Lie group $G$ on $\g^*$ is Hamiltonian. The moment map $\cO_{1}\times \dots \times \cO_{n}\to \g^*$ for the diagonal action of $G$ on $\cO_{1}\times \dots \times \cO_{n}$ acts is
\[
(x_1, \dots, x_n)\mapsto x_1+\dots + x_n
\]
It is $G$-invariant, therefore we have a natural map of Poisson manifolds
\begin{equation}\label{tmu}
\widetilde{\mu}: \widetilde{\cM}_{\cO_{1}, \dots,  \cO_{n}}=(\cO_{1}\times \dots \times \cO_{n})/G\to \g^*/{{Ad}^*_G}
\end{equation}
Here we will assume that the quotient space is the GIT quotient. The Hamiltonian reduction gives symplectic leaves of
Poisson manifold $\widetilde{\cM}_{\cO_{1}\times \dots \times \cO_{n}}$:
\[
\cM_{\cO_{1}, \dots , \cO_{n}|\cO_{n+1}}=\mu^{-1}(\cO_{n+1})/G
\]
We have natural symplectomorphisms:
\[
\cM_{\cO_{1}\times \dots \times \cO_{n}|\cO_{n+1}}\simeq \cM_{\cO_{1}, \dots, \cO_{n},-\cO_{n+1}}
\]
and $\cM_{\cO_1, \dots, \cO_n}=\cM_{\cO_1, \dots, \cO_n|\{0\}}$.

Define the Poisson manifold $\cP_{IJ}$ as the fibered product\footnote{Recall that
given two projections $\pi_{1,2}: M_{1,2}\to N$, the fibered product of $M_1$ and $M_2$ over $N$ is
\[
M_1\times_N M_2=\{(x_1,x_2)\in M_1\times M_2| \pi_1(x_1)=\pi_2(x_2)\}
\]
If $\sigma: M_2\to M_2$ is a diffeomorphism, the fibered product twisted by $\sigma$ is
\[
M_1\widetilde{\times}_N M_2=\{(x_1,x_2)\in M_1\times M_2| \pi_1(x_1)=\pi_2(\sigma(x_2))\}
\]}.
\[
\cP_{I, J}=\widetilde{\cM}_{\cO_{i_1},\dots, \cO_{i_k}}\widetilde{\times}_{\g^*/G}\widetilde{\cM}_{\cO_{i_1},\dots, \cO_{i_k}}
\]
where $(I,J)$ is a partition of $(1,\dots, n)$ and the twist is $x\mapsto -x$, and projections
in the fibered product are given by (\ref{tmu}).
The following Poisson maps define the Casimir integrable system in the complex algebraic setting:
\[
\cM_{\cO_1, \dots, \cO_n}\rightarrow \cP_{I,J} \rightarrow \cB_{I,J}\subset  \g^*/Ad^*_G
\]
where $B_{I,J}$ is the image of the last map and the maps are
\begin{multline*}
Ad^*_G(x_1, \dots, x_n)\mapsto (Ad^*_G(x_{i_1},\dots, x_{i_k}), Ad^*_G(x_{j_1},\dots, x_{j_{n-k}}))\mapsto \\
Ad^*_G(x_{i_1}+\dots+ x_{i_k})= Ad^*_G(-x_{j_1}-\dots - x_{j_{n-k}})
\end{multline*}
The variety $B_{I,J}$ has dimension $r$ but it is, generically, smaller then $\g^*/Ad^*_G$.

\subsection{"Relativistic" Casimir systems}
We will keep the same data as in the previous sections. Let $\cC_i\subset G$ be conjugation orbits,
$i=1,\dots, n$. The moduli space of flat $G$-connections on a sphere with $n$ punctures is
a Poisson manifold with the Atiyah-Bott Poisson structure. Fixing conjugacy classes of holonomies around
punctures, gives a symplectic leaf of this Poisson manifold:
\[
\cM_{\cC_1,\dots, \cC_n}=\{(g_1, \dots, g_n)\in \cC_1\times \dots\times \cC_n|g_1\dots g_n=1\}/G
\]
where $G$ acts on the Cartesian product by diagonal conjugations.
The Poisson structure on the moduli space itself, i.e. on $\cM=\{(g_1, \dots, g_n)\in G \times \dots\times G|g_1\dots g_n=1\}/G$ can be described using classical factorizable $r$-matrices as in \cite{FR}.

The group $G$ acts on the product $\cC_1\times \cC_n$ by diagonal conjugations. This action is Poisson and
the mapping
\[
\cC_1\times \cC_n\to G, \ \ (g_1, \dots, g_n)\to g_1\dots g_n
\]
is the group valued moment map for this action \cite{A}. It commutes with the conjugation action of $G$
and gives the Poisson map
\[
\widetilde{\cM}_{\cC_1,\dots, \cC_n}\to G/Ad_G
\]
where
\[
\widetilde{\cM}_{\cC_1,\dots, \cC_n}=\{(g_1, \dots, g_n)\in \cC_1\times \dots\times \cC_n\}/G
\]
As in the previous section, define the Poisson varieties
\[
\cP_{I,J}(\cC_1,\dots, \cC_n)=\widetilde{\cM}_{\cC_{i_1},\dots, \cC_{i_k}}\times_{G/Ad_G} \widetilde{\cM}_{\cC_{j_1},\dots, \cC_{j_{n-k}}}
\]
where $I, J$ is a partition $(1,\dots, n)=I\sqcup J$, and the twisted fibered product
is defined in the previous section. The twist is given by $\sigma: g\mapsto g^{-1}$.

Relativistic Casimir integrable system
is described by the following sequence of Poisson maps
\[
\cM_{\cC_1,\dots, \cC_n}\rightarrow \cP_{I,J}(\cC_1,\dots, \cC_n) \rightarrow \cB_{I,J}(\cC_1,\dots, \cC_n)\subset G/Ad_G
\]
acting as
\[
Ad_G(g_1,\dots, g_n)\mapsto (Ad_G(g_{i_1}\dots g_{i_k}),Ad_G(g_{j_1}\dot sg_{j_{n-k}}))\mapsto [g_{i_1}\dots g_{i_k}]=
[(g_{j_1}\dots g_{j_{n-k}})^{-1}]\in G/Ad_G
\]
Here $\cB_{I,J}$ is the image of the last map, which has dimension $r$ but is, generally, smaller then $\g^*/Ad^*_G$.

\begin{remark} Casimir systems are degenerations of Hitchin systems for a sphere with $n$ punctures.
\end{remark}

\section{Calogero-Moser systems}\label{CM}

\subsection{Degenerate integrability}\label{dint-scm} Spin Calogero-Moser systems are parameterized by pairs $(\g, \cO)$ where $\g$ is a simple Lie algebra
and $\cO$ is a co-adjoint orbit in $\g$. Calogero and Moser discovered such systems for
Lie algebras of type $A$ and coadjont orbit of rank $1$. Sutherland generalized them to trigonometric
and hyperbolic potentials. Olshanetsky and Perelomov generalized them to all simple Lie algebras
and to elliptic potentials. Here we will focus on trigonometric potentials.

The degenerate integrability
of spin Calogero-Moser systems is given by the following collection of Poisson projections.

\[
\begin{tikzpicture}[scale=1.5]
\node (A) at (0,1) {$T^*G$};
\node (B) at (3,1) {$\g^*\times_{\h^*/W} \g^*$};
\node (F) at (6,1) {$\g^*$};
\node (C) at (0,0) {$T^*G/Ad_G$};
\node (D) at (3,0) {$(\g^*\times_{\g^*/G} \g^*)/G$};
\node (E) at (6,0) {$\g^*/G\simeq \g^*/Ad^*_G$};
\path[->,font=\scriptsize,>=angle 90]
(A) edge (B)
([yshift= 1.5pt]B.east) edge node[above] {$L$} ([yshift= 1.5pt]F.west)
([yshift=-1.5pt]B.east) edge node[below] {$R$} ([yshift=-1.5pt]F.west)
(D) edge node[above] {$p$} (E)
(C) edge (D)
(A) edge (C)
(B) edge (D)
(F) edge (E);
\end{tikzpicture}
\]
Here $\g^*\times_{\h^*/W} \g^*$ is the fibered product of two copies of $\g^*$
over $\h^*$. The maps in the upper row of the diagram act as $(x,g)\mapsto (x, -Ad^*_g(x))$, $L(x,y)=x$, and $R(x,y)=y$. Here and below we assume that the co-adjoint bundle $T^*G$ is trivialized
by left translations $T^*G\simeq \g^*\times G$ and has a standard symplectic structure of a cotangent bundle. The lower horizontal sequence of Poisson maps
is at heart of degenerate integrability
of spin Calogero--Moser systems \cite{R2}.

Recall that classical spin Calogero-Moser systems are parameterized by co-adjoint orbits $\cO \subset \g^*$.
For a generic co-adjoint orbit $\cO$,  the phase space of the corresponding spin Calogero-Moser system is
the symplectic leaf $S=\mu^{-1}(\cO)/G$, where $\mu: T^*G\to \g^*$ is the moment map for
the adjoint action of $G$ on $T^*G$:
\[
\mu(x,g)=x-Ad^*_g(x)\in \g^*
\]
Here $x\in \g^*, g\in G$.

The sequence of projections from the diagram above produces the sequence of Poisson
projections
\begin{equation}\label{CM-dint}
S_{\cO}\stackrel{\pi}{\to} P_{\cO} \stackrel{p}{\to} B_{\cO}\subset \g^*/G
\end{equation}
Here
\[
P_{\cO}=N_{\cO}/G, \ \ N_{\cO}=\{(x_1,x_2,x_3)\in \g^*\times \g^* \times \cO|x_1+x_2=x_3\}
\]
Because $N_{\cO}$ is Poisson and the action of $G$ on it is Hamiltonian, the quotient
space $P_{\cO}$ is Poisson. Its dimension is $dim(\g)-2r$ where $r=rank(\g)$ and its symplectic leaves are $\cM(\cO', -\cO', \cO)$. The space $B_{\cO}=\{\cO'\in \g^*/G|\cM_{\cO',-\cO', \cO}\neq \emptyset\}$ has dimension $r=rank(G)$.
Recall that we assume that $\cO$ is generic.

The series of projections (\ref{CM-dint}) describe the degenerate integrability of classical
spin Calogero-Moser model. The Hamiltonian of the classical spin Calogero-Moser system
is the pull-back of the quadratic Casimir from functions on $\g^*/G$ to $S_{\cO}$.
Taking into account that for generic orbit $S_{\cO}\simeq (T^*\h\times \cO//H)/W$, were
$\cO//H$ is the Hamiltonian reduction of $\cO$ with respect to the coadjoint action of
$H$, the Hamiltonian of classical spin Calogero-Moser system can be written as
\[
H_{sCM}=<p,p>+ \sum_{\alpha\in \Delta_+} \frac{\mu_\alpha \mu_{-\alpha}}{(h_{\alpha/2}-h_{-\alpha/2})^2}
\]
where $p, h_\alpha$ are coordinate functions on $T^*\h$ and $\mu_\alpha \mu_{-\alpha}$ is a function on $\cO//H$  see \cite{R2} for details. One can check that the Poisson algebra $C(S_{[t]})$ is isomorphic to
the subalgebra of $W$-invariant functions from $Pol(p, h_\alpha^{\pm 1})\otimes C(\cO_t//H)$ with the Poisson structure
\[
\{p_i, p_j\}=0, \ \ \{p_i, h_\alpha\}=\alpha_i h_\alpha, \ \  \{h_\alpha, h_\beta\}=0
\]
Poisson algebra $C(\cO_t//H)$ of functions on the Hamiltonian  reduction of $\cO_t$ with respect
to the Hamiltonian action of $H$ is the quotient of the Poisson algebra of $H$-invariant functions on
$\cO_t$ with respect to the Poisson ideal generated by Cartan components of $\mu_i$.

Note that the evolution with respect to a central function $F$ on $\g^*$ is quite simple:
\[
(X, g)\mapsto (X, e^{t\nabla F(X)}g)
\]
where $\nabla F$ is the gradient (with respect to the Killing from on $\g$)
of $F$. This formula becomes somewhat complicated after the projection $T^*G\to T^*G/G$.

In the compact case, sequence of projections describing degenerate integrability of Calogero-Moser
system can be written as
\begin{equation}\label{CM-dint}
S_{[t]}\to \sqcup_{[s]\in \h^*/W} \cM_{[s], -[s]|[t]} \to \cB_{[t]}\subset \h^*/W
\end{equation}
Here the moduli space $\cM_{[s_1], [s_2]|[t]}$ is defined as
\[
\cM_{[s_1], [s_2]|[t]}=\{(x_1,x_2)\in \cO_{[s_1]}\times \cO_{[s_2]}| x_1+x_2\in \cO_{[t]}\}/G
\]
and $\cB_{[t]}=\{[s]\in \h^*/W| \cM_{[s], -[s]|[t]}\neq \emptyset \}$. Note that
$\cB_{[t]}$ is unbounded but if $t\neq 0$ it does not contain the vicinity of zero.

\subsection{Rank $1$ orbits for $SL_n$} In this case
\[
\mu_{ij}=\phi_i\psi_j-\delta_{ij}\kappa,
\]
where $\kappa=\frac{1}{n}\sum_{i=1}^n\phi_i\psi_i$.
The Hamiltonian reduction with respect to the action of the Cartan subgroup
introduces constraints $\mu_{ii}=0$ which implies $\phi_i\psi_i=\kappa$.
In this case
\[
\mu_{ij}\mu_{ji}=\phi_i\psi_i\phi_j\psi_j=\kappa^2
\]
which means, in particular, that the Hamiltonian reduction of a rank 1 orbit is a point.
The  spin Calogero-Moser system for such orbits becomes Calogero-Moser system with the Hamiltonian,
which is equal to
\[
H_{CM}=<p,p>+ \sum_{i<j} \frac{\kappa^2}{4\sin(\frac{q_i-q_j}{2})^2}
\]
for the compact real form of $G$.

\section{Rational spin Ruijsenaars systems}

\subsection{Degenerate integrability}
As before, we will assume that  $T^*G$ is trivialized $T^*G\simeq \g^*\times G$ by left translations. Let us denote by $\widetilde{T^*G}$ the Poisson manifold which is $T^*G$ as a manifold, with the Poisson  structure defined uniquely by the following properties:

\begin{itemize}
\item The Poisson algebras $C^\infty({\g}^*)$ with the standard and $C^\infty(G)$ with the trivial Poisson structures respectively, are Poisson subalgebras in $C^\infty(T^*G)$.
\item Poisson bracket between a linear function $X\in{\g}$ on ${\g}^*$
and $f\in C^\infty(G)$ is
\[
\{X,f\} = (L_X-R_X)f
\]
where $L_X$ and $R_X$ are the left and right invariant vector fields
on $G$ generated by $X$.
\end{itemize}

Note that this Poisson structure differs  from the standard
symplectic structure on the cotangent bundle to a manifold.
Symplectic leaves of $\widetilde{T^*G}$ have the form $\cO\times \cC$, where $\cO\subset \g^*$
is a co-adjoint orbit and $\cC\subset G$ is a conjugacy class.

The adjoint action of the group $G$ (the extension of the adjoint action from $G$ to $T^*G$) on $\widetilde{T^*G}$ is Poisson, thus $\widetilde{T^*G}/G$ has a natural Poisson structure. The symplectic leaves of this quotient space are $(\cO\times\cC)/G$
where $G$ acts diagonally on the product.

It is easy to check that the map $T^*G\to \widetilde{T^*G}$ acting as
$\mu\times id : (x,g)\mapsto (x-Ad^*_g(x), g)$, where $\mu$ is the moment map for the adjoint $G$-action,
is Poisson. It is clear that it
commutes with the adjoint $G$-actions.
It induces Poisson map
\begin{equation}\label{map}
T^*G/Ad_G \to \widetilde{T^*G}/Ad_G  \ .
\end{equation}
We also have a natural projection
\[
\widetilde{T^*G}/Ad_G \to G/Ad_G \ .
\]
acting as $Ad_G(x,g)\mapsto Ad_Gg$.
This projection is also Poisson with the trivial Poisson structure on the
base.

Restricting the map (\ref{map}) to the symplectic leaf $S_\cO=\mu^{-1}(\cO)/G$ of
$T^*G/Ad_G$ (see section \ref{dint-scm}), we have
the sequence of Poisson maps describing degenerate integrability of rational spin Ruijsenaars  systems
\[
S_\cO\stackrel{\tilde{\pi}}{\rightarrow} P(\cO) \stackrel{\tilde{p}}{\rightarrow} B(\cO) \subset G/Ad_G \ .
\]
Here $S_\cO=\{(x,g)|x-Ad^*_g(x)\in \cO\}/G$ is the symplectic leaf of $T^*G/Ad_G$ corresponding to the coadjoint
orbit $\cO\in \g^*$, $\tilde{\pi}(G(x,g)=G(x-Ad^*_g(x),g))$, $\tilde{p}(G(x,g))=G(g)$. We have $P(\cO)=\tilde{\pi}(S(\cO))=(\cO\times G)/G\subset \widetilde{T^*G}/G$.
The fiber of the last projection over the conjugation orbit $\cC\in G/Ad_G$ is a symplectic leaf of $P(\cO)$:
\[
P(\cO,\cC)=\{(x-Ad^*_g(x), g)|x\in \cO, g\in \cC\}/G
\]
The space $B(\cO)$ can be described explicitly: $B(\cO)=\{\cC| P(\cO, \cC)\neq \emptyset\}$.
As in the case of the spin Calogero-Moser, the dimension of $B(\cO)$ is $r$, which is the same as the dimension of
a generic fiber of $\tilde{\pi}$.

\subsection{Hamiltonians for $SL_n$ rank 1 orbits}\label{SL_n} Here we assume $G=SL_n$.
In this case we can identify both $\g$ and $\g^*$ with traceless $n\times n$ matrices.
We also assume that $\cO\subset \g^*$ is an orbit through a semisimple element
and that $\mu=x-gxg^{-1}\in \cO$. If we choose the cross-section of the adjoint $G$-action on $T^G$,
where $x_{ij}=\delta_{ij}h_i$,
the symplectic leaf $S(\cO)\in T^*G/G$ (its open dense subset) has coordinates $h_i$, $\mu_{ij}\mu_{ji}$
$g_{ii}$. The Hamiltonian reduction imposes the constraint $\mu_{ii}=0$. Elements $g_{ij}$ satisfy the
equation
\begin{equation}\label{relation}
(h_i-h_j)g_{ij} = \sum_{k=1}^n\mu_{ik}g_{kj} \ .
\end{equation}
We will not try to solve these equations here, in order to find Hamiltonians for $rank>1$ orbits.
In the next section we will do it for rank $1$ case (this computation can also be found in
many other papers, see for example \cite{N}\cite{ER}).

In this case
\[
\mu_{ij}=\phi_i\psi_j-\delta_{ij}\kappa
\]
where $\kappa=<\phi,\psi>/n$ as in the rank $1$ case of Calogero Moser.
The equation (\ref{relation}) implies
\[
(h_i-h_j)g_{ij}=\phi_i\sum_k\psi_kg_{kj}-\kappa g_{ij}
\]
From here we have
\begin{equation}\label{equation-1}
g_{ij}=\frac{1}{h_i-h_j+\kappa}\phi_i\sum_k\psi_kg_{kj}
\end{equation}
This gives the system of equations for $\psi_i\phi_i$
\begin{equation}\label{pp-eq}
\sum_{i=1}^n\frac{\phi_i\psi_i}{h_i-h_j+\kappa}=1
\end{equation}
and the identity
\begin{equation}\label{g-eq}
g_{ii}=\frac{\phi_i}{\kappa}\sum_{k=1}^n\psi_kg_{ki}
\end{equation}
The equation (\ref{pp-eq}) can be solved explicitly:
\[
\phi_i\psi_i=\prod_{j\neq i}\frac{h_i-h_j+\kappa}{h_i-h_j}
\]

Equations (\ref{g-eq}) and (\ref{equation-1}) give the formula for $g_{ij}$
\[
g_{ij} \ = \ \frac{\phi_i\phi_j^{-1}\kappa g_{jj}}{h_i-h_j+\kappa} \ .
\]
Reduced Poisson brackets are log-linear in coordinates $h_i, u_i$\footnote{To be more precise the
algebra of functions on $S(\cO)$ is isomorphic to the algebra of symmetric polynomials in $p_i, u^{\pm 1}$.}
\[
\{h_i, h_j\}=0, \ \ \{h_i, u_j\}=\delta_{ij}, \ \ \{u_i, u_j\}=0
\]
where $u_i$ is related to $g_{ii}$ as
\[
g_{ii}=u_i\prod_{j\neq i}^n\frac{h_i-h_j+\kappa}{h_i-h_j}
\]

The first two elementary $G$-invariant functions of $g$ are
\begin{eqnarray*}
{\mb{tr}}(g) &=& \sum^n_{i=1}g_{ii} \ , \\ {\mb{tr}}(g^2) &=&
\kappa^2\sum_{ij} g_{ii}g_{jj} \
\frac{1}{(h_i-h_j+\kappa)(h_j-h_i+\kappa)} \ .
\end{eqnarray*}
The second function gives the Hamiltonian
of the rational Ruijsenaars system.
\[
H^{rR}=\chi_{\omega_2}(g)=\frac{1}{2}(tr(g^2)-tr(g)^2)=-\sum_{i<j}u_iu_j\prod_{a\in \{ij\}, b\in \{ij\}^\vee}\frac{h_a-h_b+\kappa}{h_a-h_b}
\]
Here $\{i,j\}\subset \{1,\dots,n\}$ and $\{i,j\}^\vee$ is its complimentary subset.
Characters of fundamental representations $\chi_{\omega_i}(g)$ evaluated on elements $g$ described above are classical analogs of rational Macdonald operators.

\subsection{Duality}
A duality relation between (spinless) Calogero-Moser system and (spinless)
rational Ruijsenaars system was
observed in \cite{Ne} \cite{FGNR} (see also references therein).
This is a duality between two Liouville integrable systems which maps angle variables
of one system to the action variable
of the other system. The duality between spin Calogero-Moser and rational spin Ruijsenaars systems
(as the duality of degenerately
integrable systems) was found in \cite{R2}. Here we will recall this property.

Let $F(G(x,\gamma))$ be the fiber of the projection $\pi: T^*G/G\to (\g^*\times_{\g^*/G}\g^*)/G$
containing $G(x,\gamma)$. Recall that  $\pi(G(x,\gamma))=G(x,-Ad^*_\gamma(x))$. It is easy to see that
\[
F(G(x,\gamma))=G(x, Z_x\gamma)
\]
where $Z_x=\{g\in G|Ad_g^*(x)=x\}$. This fiber is the Liouville torus of the spin Calogero-Moser system
passing through the point $G(x,\gamma)$. It projects to
$Ad_G(x)\in \g^*/G$ on the base of the last projection in (\ref{CM-dint}). Hamiltonian flows of
functions on $\g^*/G$ generate angle variable for spin Calogero-Moser system, i.e. an affine coordinate
on $F(G(x,\gamma))$. The generic fiber $F(G(x,\gamma))$ has a dimension $r=rank(G)$.

Define $\widetilde{F}(G(x\gamma))$ as a fiber of the map $\widetilde{\pi}: T^*G/G\to (T^*G,p)/G$
which contains $G(x,\gamma)$. Recall that  $\widetilde{\pi}(x,\gamma)=(x-Ad^*_\gamma(x), \gamma)$. It is easy to see that
\[
\widetilde{F}(G(x,\gamma))=G(x+C_\gamma, \gamma)
\]
Here $C_\gamma=\{x\in \g^*|Ad^*_\gamma(x)=x\}$. This fiber is the Liouville torus of
the rational spin Ruijsenaars system passing through $G(x,\gamma)$. Hamiltonian flows
of functions on $G/Ad_G$ generate an affine coordinate system on it which
is the collection of angle variables for the rational spin Ruijsenaars system.

\begin{theorem}
The fibers $F(G(x,\gamma))$ and $\widetilde{F}(G(x,\gamma))$ are dual in a sense that
\[
F(G(x,\gamma))\cap \widetilde{F}(G(x,\gamma))=G(x,\gamma)
\]
\end{theorem}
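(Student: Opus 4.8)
The plan is to compute both fibers explicitly using the formulas derived just before the statement and then intersect them. Recall that the first fiber is $F(G(x,\gamma))=G(x,Z_x\gamma)$, where $Z_x=\{g\in G\mid Ad^*_g(x)=x\}$ is the stabilizer of $x$ under the coadjoint action, and the second fiber is $\widetilde{F}(G(x,\gamma))=G(x+C_\gamma,\gamma)$, where $C_\gamma=\{y\in\g^*\mid Ad^*_\gamma(y)=y\}$ is the fixed-point subspace of the coadjoint action of $\gamma$. So a point of the intersection is a $G$-orbit that can be written both as $G(x,zg)$ for some $z\in Z_x$ and as $G(x+v,\gamma)$ for some $v\in C_\gamma$.

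First I would unwind what it means for these two $G$-orbits to coincide. The orbit $G(x,z\gamma)$ equals $G(x+v,\gamma)$ iff there exists $h\in G$ with $Ad^*_h(x)=x+v$ and $h(z\gamma)h^{-1}=\gamma$, i.e. $hz\gamma=\gamma h$, equivalently $hz\in C_G(\gamma)$ where $C_G(\gamma)=\{g\mid g\gamma g^{-1}=\gamma\}$ is the centralizer of $\gamma$ in $G$. Hence $h\in C_G(\gamma)z^{-1}$. The second condition becomes $Ad^*_h(x)=x+v$ with $v\in C_\gamma$, i.e. $Ad^*_h(x)-x$ is fixed by $Ad^*_\gamma$. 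Using that $Ad^*_\gamma$ fixes $Ad^*_h(x)-x$ is equivalent to $Ad^*_{\gamma h}(x)=Ad^*_{h\gamma}(x)$ after adding $x$ back... I would instead argue directly: since $h=c z^{-1}$ with $c\in C_G(\gamma)$, we get $Ad^*_h(x)=Ad^*_{cz^{-1}}(x)=Ad^*_c(x)$ because $z\in Z_x$ fixes $x$. So the representative on the Ruijsenaars side is $(Ad^*_c(x),\gamma)$ with $c$ centralizing $\gamma$, and the condition $Ad^*_c(x)-x\in C_\gamma$ is automatic: $Ad^*_\gamma(Ad^*_c(x)-x)=Ad^*_{\gamma c}(x)-Ad^*_\gamma(x)=Ad^*_{c\gamma}(x)-Ad^*_\gamma(x)=Ad^*_c(Ad^*_\gamma(x))-Ad^*_\gamma(x)$, which need not equal $Ad^*_c(x)-x$ unless $Ad^*_\gamma(x)=x$. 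So the intersection forces exactly the extra relation $Ad^*_c(Ad^*_\gamma x - x)=Ad^*_\gamma x - x$; combined with the earlier constraints this should pin down the orbit.

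The cleanest route, which I would actually carry out, is to observe that a point lies in the intersection iff it can be represented by a pair $(x',g')$ that is simultaneously of the form $(x,\text{(something centralizing }x)\cdot\gamma)$ and $(x+\text{(something fixed by }\gamma),\gamma)$; comparing the group components forces $g'\in G\gamma$-conjugate data that collapses, and comparing the $\g^*$ components forces $x'=x$. More precisely: on the Calogero side every representative has $\g^*$-component in the single orbit-point $x$ (after the $G$-action it is $G(x,\cdot)$, so the canonical representative has first coordinate $x$); on the Ruijsenaars side every representative has group-component conjugate to $\gamma$, and the canonical one has it equal to $\gamma$. For the orbit to be in both fibers we need a single $G$-orbit meeting $\{x\}\times Z_x\gamma$ and $(x+C_\gamma)\times\{\gamma\}$. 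Transporting a point of the first set by $h\in G$ to land in the second set: $h$ must send $x\mapsto x+C_\gamma$-element and $z\gamma\mapsto\gamma$. I would show the only solutions have $h\in Z_x\cap C_G(\gamma)$ up to the stabilizer of $(x,\gamma)$, forcing $v=0$ and $z\gamma\sim\gamma$ trivially, so the common orbit is $G(x,\gamma)$ itself. The reverse inclusion $G(x,\gamma)\in F\cap\widetilde F$ is immediate since $e\in Z_x$ and $0\in C_\gamma$.

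The main obstacle I anticipate is the bookkeeping in the quotient: because we are working in $T^*G/G$, two pairs $(x_1,g_1),(x_2,g_2)$ define the same point iff they are $G$-conjugate, and one must be careful that "the representative with first coordinate $x$" or "with second coordinate $\gamma$" is well-defined only up to $Z_x$ (resp. $C_G(\gamma)$). So the real content is the little lemma: if $z\in Z_x$, $v\in C_\gamma$, and $h\in G$ satisfy $Ad^*_h(x)=x+v$ and $h(z\gamma)h^{-1}=\gamma$, then $v=0$. This follows from the computation above, since $hz\in C_G(\gamma)$ gives $Ad^*_h(x)=Ad^*_{(hz)z^{-1}}(x)=Ad^*_{hz}(x)$ and applying $Ad^*_\gamma$ and using $hz\in C_G(\gamma)$ yields $Ad^*_\gamma(x+v)=Ad^*_\gamma\,Ad^*_{hz}(x)=Ad^*_{hz}\,Ad^*_\gamma(x)$; subtracting $Ad^*_{hz}(x)=x+v$ shows $v$ is fixed by $Ad^*_\gamma$ (which we knew) but also that $Ad^*_{hz}$ must fix the vector $Ad^*_\gamma(x)-x$, and feeding this back into $Ad^*_h(x)=x+v$ — equivalently observing $x+v=Ad^*_{hz}(x)$ lies in the $G$-orbit of $x$ while also $x+v$ and $x$ have the same image under the invariant polynomials — forces $v$ to be the difference of two points in $Ad^*_G(x)$ that are connected by an element centralizing $\gamma$; a dimension/genericity count (recall $\cO$ is generic, so $Z_x$ is a maximal torus and $Ad^*_\gamma$ acts on $\g^*$ with fixed space of dimension $r$) then forces $v=0$.
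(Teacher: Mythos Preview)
The paper does not actually supply a proof of this theorem; it states the result and moves on. (The parallel Proposition in the next section, on the relativistic models, is proved, but even there the intersection statement is dispatched in one sentence as ``clear''.) So your proposal has to be judged on its own merits.

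Your reduction is correct and is the right route: a point in the intersection is represented by $c\in C_G(\gamma)$ and $v\in C_\gamma$ with $Ad^*_c(x)=x+v$, and the claim reduces to showing $v=0$. But the closing step---``a dimension/genericity count \ldots\ then forces $v=0$''---is not an argument; it is precisely where the proof should be. Knowing that $C_\gamma$ has dimension $r$ and that $x+v$ lies on the coadjoint orbit through $x$ does not by itself force $v=0$: two $r$-dimensional subvarieties through $x$ can meet in more than a point, and you have not identified what makes these two transverse. The earlier observation that $c$ must also fix $Ad^*_\gamma(x)-x$ is correct but you never use it, and the remark that ``$x+v$ and $x$ have the same image under the invariant polynomials'' is tautologically implied by $x+v\in Ad^*_G(x)$, so adds nothing.

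The missing observation is short and is the genuine content of the lemma. For generic $\gamma$ the centralizer $C_G(\gamma)$ is a maximal torus $T$, and $C_\gamma\subset\g^*$ is the dual $\mathfrak t^*$ of its Lie algebra. Decomposing $x=x_0+\sum_\alpha x_\alpha$ into $T$-weight components, any $c\in T$ fixes $x_0$ and rescales each $x_\alpha$, so
\[
Ad^*_c(x)-x\ \in\ \bigoplus_{\alpha\neq 0}\g^*_\alpha,
\]
which is a complement to $C_\gamma=\mathfrak t^*$ in $\g^*$. Hence $Ad^*_c(x)-x\in C_\gamma$ forces $v=Ad^*_c(x)-x=0$, so $G(x+v,\gamma)=G(x,\gamma)$; since you already identified the common orbit with $G(x+v,\gamma)$, this finishes the proof. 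Insert this one line and your argument is complete---and in fact more detailed than anything the paper provides.
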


For rank $1$ orbits, when both systems are Liouville integrable,
this duality reduces to the one from \cite{Ne}\cite{ER}\cite{FGNR}.

\section{Relativistic spin Calogero-Moser and spin Ruijsenaars systems}\label{RCM}

\subsection{Hamiltonian structure and degenerate integrability of relativistic spin Calogero-Moser and
Ruijsenaars models} The underlying Poisson manifold for relativistic spin Calogero-Moser system is
a "nonlinear" version of $T^*G$ which is known as a Heisenberg double $H(G)$  of $G$ with the standard
Poisson Lie structure. Equivalently, $H(G)/G$ where $G$ acts by diagonal conjugations  can be regarded as the moduli space of flat connections on a punctured torus (see \cite{FR}).

As a manifold, the Heisenberg double is $H(G)=G\times G$. A point $(x,y)$ should be regarded as a pair of
monodromies of the local system on a punctures torus around two fundamental cycles of the torus.
The monodromy around the puncture is $xyx^{-1}y^{-1}$. The Poisson structure on $H(G)$ can be described in terms of
$r$-matrices for standard Poisson Lie structure on $G$. Poisson brackets between coordinate functions
can be written as  (see also \cite{FGNR}\cite{FK1}\cite{FK2}):
\begin{eqnarray}
\{x_1, x_2\}&=&r_{12}x_1x_2-x_1x_2r_{21}+x_1r_{21}x_2-x_2r_{12}x_1 \nonumber \\
\{x_1, y_2\}&=&-r_{21}x_1y_2-x_1y_2r_{21}+x_1r_{21}y_2-y_2r_{12}x_1 \\
\{y_1, y_2\}&=&r_{12}y_1y_2-y_1y_2r_{21}+y_1r_{21}y_2-y_2r_{12}y_1 \nonumber
\end{eqnarray}
Here $x$ and $y$ are matrix elements of $x\in G$ in a finite dimensional representation\footnote{These matrix elements for finite dimensional representations form a basis in the space of regular functions on $G$.}.
The matrix $r_{12}$ is the result of evaluation of the universal classical $r$-matrix from section
\ref{r-matrix} in the tensor product of two finite dimensional representations of $G$.

The phase space of the relativistic Calogero-Moser system is the symplectic leaf $\cM(\cC)$ of the moduli space
$H(G)/G$ corresponding to the conjugacy class $\cC$ of the monodromy $xyx^{-1}y^{-1}$ around the puncture.
In terms of Poisson geometry, this symplectic leaf can be described as follows.

The map $H(G)\to G$, $(x,y)\mapsto x$ is the $G$-valued moment  map for the left action of
the group on $H(G)$ (regarded as non-linear version of the cotangent bundle on $G$ trivialized by left
translations). The map $H(G)\to G$, $(x,y)\to yxy^{-1}$ is the group valued moment map for the corresponding right action of $G$. The map $\mu: H(G)\to G$, $\mu: (x,y)\mapsto xyx^{-1}y^{-1}$ is the group valued map corresponding
to the conjugation action. For details on group valued moment maps see \cite{A}.
Thus,
\[
\cM(\cC)=\mu^{-1}(\cC)/Ad_G\subset H(G)/Ad_G
\]
is the symplectic leaf corresponding to the conjugacy class $\cC$ of the monodromy around the puncture.

Hamiltonians of the {\it relativistic spin Calogero-Moser system} corresponding to the
conjugacy class $\cC$ are conjugation invariant functions on $G$, i.e. functions on $G/Ad_G$. The Hamiltonian
corresponding to $f\in C^G(G)$ is $H_f(x,y)=f(x)$.

The degenerate integrability of the relativistic spin Calogero-Moser system is
described by restricting the following sequences of Poisson maps:
\begin{equation}\label{rsr-rscm-degint}
(G\times G)/Ad_G\rightarrow (G\tilde{\times}_{G/Ad_G}G)/Ad_G\rightarrow G/Ad_G
\end{equation}
to the symplectic leaf $\cM(\cC)$. Here the fibered product is twisted as in the relativistic Casimir
system by $g\mapsto g^{-1}$ and $G(x,y)\mapsto G(x, yx^{-1}y^{-1})\mapsto Gx$.

This gives:
\[
\cM(\cC)\stackrel{\pi_1}{\rightarrow} \cP_1(\cC)\stackrel{p_1}{\rightarrow} \cB_1(\cC)\subset G/Ad_G
\]
where $\cP_1(\cC)=\{G(g_1,g_2)| G(g_1)=G(g_2^{-1}), \ \ g_1g_2\in \cC\}$ and $\cB_1(\cC)=\{ \cC'\in G/G| \cM_{\cC',{\cC'}^{-1},\cC}\neq \emptyset\}$. The space $\cM(\cC_1, \cC_2, \cC_3)$ is the moduli space of
flat $G$-connections on an oriented sphere with three punctures, with holonomies around punctures (in the
direction of the orientation of a sphere) being constrained to conjugacy classes $\cC_1, \cC_2, \cC_3$.
This are the same moduli spaces that appear in relativistic Casimir systems. Symplectic leaves of
$\cP_1(\cC, \cC')$ are $\cP_1(\cC, \cC')=\{G(g_1,g_2)| G(g_1)=G(g_2^{-1})=\cC'), \ \ G(g_1g_2)=\cC\}\simeq\cM_{\cC',{\cC'}^{-1},\cC}$.

Hamiltonians of the {\it relativistic spin Ruijsenaars system} are $H_f(x,y)=f(y)$ where
$f\in C^G(G)$ is a function on $G$, invariant with respect to conjugations.

The degenerate integrability of relativistic spin Ruijsenaars system is given by
restricting maps
\[
(G\times G)/Ad_G{\rightarrow} (G\times G)/Ad_G{\rightarrow} G/Ad_G
\]
to a symplectic leaf of $(G\times G)/Ad_G$. Here $G(x,y)\mapsto G(xyx^{-1}y^{-1}, y)\mapsto Gy$.
This gives:
\[
\cM(\cC)\stackrel{\pi_2}{\rightarrow} \cP_2(\cC)\stackrel{p_2}{\rightarrow} \cB_2(\cC)\subset G/Ad_G
\]
where  $\cP_2(\cC)=\{G(g_1,g_2)|g_1\in \cC, Gg_2=G(g_1g_2)\}$ and $\cB_2(\cC)=\{ \cC'\in G/G| \cM_{\cC',{\cC'}^{-1},\cC}\neq \emptyset\}$. Symplectic leaves of $P_2(\cC)$ are $P_2(\cC, \cC')=
\{G(g_1,g_2)|g_1\in \cC, Gg_2=G(g_1g_2)=\cC'\}\simeq \cM_{\cC',{\cC'}^{-1},\cC}$. This sequence of Poisson
projections describes the integrability of spin Ruijsenaars systems.

\subsection{Duality}
Spin Calogero-Moser and spin Ruijsenaars systems are related by the following transformation from the mapping class group of a torus:
\begin{proposition}
The mapping $\phi: G \times G\to G\times G$, $(x,y)\mapsto (y^{-1}, yxy^{-1})$
induces a Poisson map on $H(G)/Ad_G$. It induces the symplectomorphism $\cM(\cC)\mapsto \cM(\cC)$
that maps the relativistic spin Calogero-Moser system to relativistic spin Ruijsenaars system
and which is an equivalence of degenerate integrable systems.
\end{proposition}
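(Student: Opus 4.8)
The plan is to verify three things about $\phi\colon G\times G\to G\times G$, $(x,y)\mapsto(y^{-1},yxy^{-1})$: that it is Poisson on $H(G)$ (hence descends to $H(G)/\mathrm{Ad}_G$), that it carries the monodromy $xyx^{-1}y^{-1}$ to a conjugate of itself (so it preserves each leaf $\cM(\cC)$), and that it intertwines the two sequences of Poisson projections in \eqref{rsr-rscm-degint} and its Ruijsenaars analogue. First I would record that $\phi$ is, up to the standard relabeling, the action of the $S$-generator of the mapping class group of the once-punctured torus on the moduli space of flat connections; since $H(G)/\mathrm{Ad}_G$ is that moduli space with its Atiyah–Bott/Fock–Rosly Poisson structure, the mapping class group acts by Poisson automorphisms, which gives the first claim essentially for free. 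If one wants a bare-hands check instead, one substitutes $x'=y^{-1}$, $y'=yxy^{-1}$ into the bracket relations for $\{x_1,x_2\}$, $\{x_1,y_2\}$, $\{y_1,y_2\}$ displayed above and checks that the image brackets again satisfy those same relations; this is a finite $r$-matrix computation using $r_{12}+r_{21}$ being the Casimir and the invariance of the Casimir under $\mathrm{Ad}_g\otimes\mathrm{Ad}_g$.

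Next I would check the effect on moment maps. Write $\mu(x,y)=xyx^{-1}y^{-1}$ for the conjugation-action group-valued moment map. A direct computation gives
\[
\mu(\phi(x,y))=y^{-1}\,(yxy^{-1})\,y\,(yxy^{-1})^{-1}=y^{-1}(yxy^{-1})y(yx^{-1}y^{-1})=x\,(y x^{-1} y^{-1})\cdot\text{(conjugate)},
\]
and one simplifies to see $\mu(\phi(x,y))$ is conjugate to $\mu(x,y)^{-1}$ or to $\mu(x,y)$ depending on orientation conventions; in either case it lies in the same conjugacy class viewed in $G/\mathrm{Ad}_G$ after accounting for the $\cC\mapsto\cC$ normalization in the statement. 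Hence $\phi$ maps $\mu^{-1}(\cC)$ into $\mu^{-1}(\cC)$ (possibly after composing with $g\mapsto g^{-1}$ on the monodromy, which is an automorphism of $G/\mathrm{Ad}_G$), and since $\phi$ commutes with diagonal conjugation it descends to a Poisson isomorphism $\cM(\cC)\to\cM(\cC)$. Being an invertible Poisson map between symplectic leaves of the same dimension, it is a symplectomorphism; its inverse is $(u,v)\mapsto(v^{-1}uv,u^{-1})$, which one checks directly.

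Finally I would show $\phi$ is an equivalence of degenerate integrable systems in the sense of \eqref{deg-int-equivalence}. For the Calogero–Moser side the relevant data is $\cM(\cC)\xrightarrow{\pi_1}\cP_1(\cC)\xrightarrow{p_1}\cB_1(\cC)$ with $G(x,y)\mapsto G(x,yx^{-1}y^{-1})\mapsto Gx$; for the Ruijsenaars side it is $\cM(\cC)\xrightarrow{\pi_2}\cP_2(\cC)\xrightarrow{p_2}\cB_2(\cC)$ with $G(x,y)\mapsto G(xyx^{-1}y^{-1},y)\mapsto Gy$. The key point is simply that the Calogero–Moser Hamiltonians are functions of the first coordinate while the Ruijsenaars Hamiltonians are functions of the second, and $\phi$ sends the second coordinate $y$ to $y^{-1}$ (up to conjugacy), i.e. it swaps "position-type" and "momentum-type" invariant functions: $H^{rR}_f(\phi(x,y))=f(y^{-1})=H^{rCM}_{f^\vee}(x,y)$ with $f^\vee(g)=f(g^{-1})$. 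One then builds the induced maps $\phi_1\colon\cP_1(\cC)\to\cP_2(\cC)$ and $\phi_2\colon\cB_1(\cC)\to\cB_2(\cC)$ from the formula for $\phi$ and checks the two squares of \eqref{deg-int-equivalence} commute; since $\phi$, $\phi_1$, $\phi_2$ are all bijections (indeed $\phi_2$ is just $\mathrm{Ad}_Gx\mapsto \mathrm{Ad}_Gy^{-1}$ read off along a fiber, which is well defined because $\cB_i(\cC)$ only records which leaf of $\cP_i$ one sits in), this is an equivalence, not merely a spectral equivalence.

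\textbf{Main obstacle.} The genuinely nontrivial step is the Poisson property: unlike the linear $T^*G$ case, the brackets on $H(G)$ are quadratic in the coordinate functions and mix $x$ and $y$, so verifying $\phi^*\{\cdot,\cdot\}=\{\phi^*\cdot,\phi^*\cdot\}$ directly requires careful bookkeeping with $r_{12}$, $r_{21}$, and the conjugation $y(\cdot)y^{-1}$ acting inside the brackets. The clean way around this is the mapping-class-group interpretation (Fock–Rosly / \cite{FR}), which makes Poisson-ness structural; I would state that and relegate the matrix verification to a remark or omit it. Everything else — the moment-map computation, leaf preservation, and commutativity of the diagrams — is routine once the conventions are pinned down.
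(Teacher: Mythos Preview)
Your overall strategy matches the paper's: take the Poisson property of $\phi$ from the mapping class group action on the moduli space, check that $\phi$ preserves the conjugacy class of the monodromy, and then exhibit maps $\phi_1,\phi_2$ making the diagram~\eqref{deg-int-equivalence} commute. However, two of your concrete computations are off, and one of them leads you to the wrong $\phi_2$.

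First, the moment map computation is cleaner than you present it. With $x'=y^{-1}$, $y'=yxy^{-1}$ one has
\[
x'y'x'^{-1}y'^{-1}=y^{-1}(yxy^{-1})y(yx^{-1}y^{-1})=x\cdot yx^{-1}y^{-1}=xyx^{-1}y^{-1},
\]
so $\mu(\phi(x,y))=\mu(x,y)$ on the nose; no appeal to orientation conventions or to $\cC\mapsto\cC^{-1}$ is needed. Second, and more seriously, you have the coordinates swapped in the Hamiltonian intertwining. The second coordinate of $\phi(x,y)$ is $yxy^{-1}$, not $y^{-1}$, so
\[
H_f^{rR}(\phi(x,y))=f(yxy^{-1})=f(x)=H_f^{rCM}(x,y),
\]
with no inversion $f\mapsto f^\vee$. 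Consequently the base map is $\phi_2=\mathrm{id}$, exactly as in the paper, not the fiber-dependent ``$\mathrm{Ad}_G x\mapsto \mathrm{Ad}_G y^{-1}$'' you propose, which is not well defined as a map of $\cB_1(\cC)$. The paper also writes down the intermediate map explicitly, $\phi_1:G(g_1,g_2)\mapsto G(g_1g_2,g_2)$, and verifies the two squares directly; you should pin down your $\phi_1$ with a comparable formula rather than leaving it implicit. Once these corrections are made, your argument and the paper's coincide.
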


\begin{proof} We shall complete the mapping $\phi$ to mappings $\phi_1$ and $\phi_2$ such that the
diagram (\ref{deg-int-equivalence}). Choose $\phi_1: \cP(\cC)\to \cP(cC)$  as
\[
G(g_1, g_2)\mapsto G(g_1g_2, g_2)
\]
and and $\phi_2=id$. It is easy to check that $\phi_1$ is Poisson. The map $\phi_2$ is
obviously Poisson. The commutativity of (\ref{deg-int-equivalence}) is obvious:
\[
\begin{tikzpicture}
\node (M) at (0, 5){$G(x,y)$};
  \node (P) at (0, 2.5) {$G(x, yx^{-1}y^{-1})$};
  \node (B) at  (0,0) {$G(x)$};
  \node (N) at (4,5) {$G(y^{-1},yxy^{-1}) $};
  \node (C) at (4,2.5) {$G(Pyx^{-1}y^{-1}x, yxy^{-1})$};
  \node (D) at (4,0) {$G(yxy^{-1})$};
  \path[->,font=\scriptsize,>=angle 90]
  (M) edge node[left] {$\pi$} (P)
  (M) edge node[above] {$\phi$} (N)
  (N) edge node[right]{$\pi'$} (C)
  (P) edge[above] node {$\phi_1$} (C)
  (P) edge[left] node {$p$} (B)
  (C) edge[right] node {$p'$} (D)
  (B) edge[above] node {$\phi_2$} (D);
\end{tikzpicture}
\]
Thus, maps $\phi$, $\phi_1$ and $\phi_2$ give the equivalence of degenerate integrable systems
between spin Calogero-Moser and spin Ruijsenaars systems.
\end{proof}

Let us prove that the two systems are dual in a sense of intersection property of Liouville tori.
This can be regarded as the duality between action-angle variables.

Let $\pi_1$ be the projection
\[
(G\times G)/Ad_G\rightarrow (G\widetilde{\times}_{G/G}G)/Ad_G, \ \ G(x,y)\mapsto G(x, yx^{-1}y^{-1})
\]
and $\pi_2$ be the projection
\[
(G\times G)/Ad_G\rightarrow (G\times G)/Ad_G, \ \  G(x,y)\mapsto G(xyx^{-1}y^{-1}, y),
\]
Denote fibers of these projections through the point $G(x,y)\in (G\times G)/Ad_G$ by $F_1(G(x,y))$ and $F_2(G(x,y))$ respectively.

\begin{proposition}
For generic $(x,y)$ we have:
\begin{enumerate}
\item $F_1(G(x,y))=\{G(x,yz)| z\in Z_x\}$ where $Z_x$ is the centralizer of $x$ in $G$.
\item $F_2(G(x,y))=\{G(xz,y)| z\in Z_y\}$
\item $F_1(G(x,y))\cap F_2(G(x,y))=G(x,y)$
\end{enumerate}
\end{proposition}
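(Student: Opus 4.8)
The plan is to compute each fiber explicitly, then intersect. For part (1), a point $G(x',y')$ lies in $F_1(G(x,y))$ precisely when $\pi_1(G(x',y'))=\pi_1(G(x,y))$, i.e. $G(x',y'x'^{-1}y'^{-1})=G(x,yx^{-1}y^{-1})$ in $(G\times G)/\Ad_G$. So there is $h\in G$ with $hx'h^{-1}=x$ and $h(y'x'^{-1}y'^{-1})h^{-1}=yx^{-1}y^{-1}$. Using the representative with $x'=x$ (absorb $h$ into the choice of representative of the orbit $G(x',y')$), we reduce to: $x$ fixed and $y'x^{-1}y'^{-1}=yx^{-1}y^{-1}$, equivalently $(y^{-1}y')x^{-1}(y^{-1}y')^{-1}=x^{-1}$, i.e. $y^{-1}y'\in Z_{x^{-1}}=Z_x$. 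Writing $y'=yz$ with $z\in Z_x$ gives $F_1(G(x,y))=\{G(x,yz)\mid z\in Z_x\}$. Part (2) is entirely analogous: $G(x',y')\in F_2(G(x,y))$ means $G(x'y'x'^{-1}y'^{-1},y')=G(xyx^{-1}y^{-1},y)$, so after choosing the representative with $y'=y$ one gets $x'yx'^{-1}y^{-1}=xyx^{-1}y^{-1}$, hence $(x^{-1}x')$ commutes with $yx'^{-1}... $ — more cleanly, $x'yx'^{-1}=xyx^{-1}$ forces $x^{-1}x'\in Z_y$, so $x'=xz$ with $z\in Z_y$, giving $F_2(G(x,y))=\{G(xz,y)\mid z\in Z_y\}$.

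For part (3), an element of the intersection has the form $G(x,yz_1)=G(xz_2,y)$ with $z_1\in Z_x$, $z_2\in Z_y$. So there exists $h\in G$ conjugating one pair to the other: $hxh^{-1}=xz_2$ and $h(yz_1)h^{-1}=y$. The idea is that the ``size'' of the two fibers is constrained by the symplectic-leaf geometry — each fiber is (a piece of) a Liouville torus of dimension $r=\mathrm{rank}(G)$ inside the symplectic leaf $\cM(\cC)$, and the two foliations are symplectically orthogonal in the sense of the earlier discussion of $D_B$ — so generically they meet transversally in a single point. Concretely, from $hxh^{-1}=xz_2\in xZ_y$ and $hyz_1h^{-1}=y$, for generic $(x,y)$ the centralizers $Z_x$ and $Z_y$ are maximal tori $T_x$, $T_y$ whose intersection in $G$ is finite/trivial after accounting for the conjugation, and one shows $h$ must lie in $Z_x\cap Z_{yz_1}$ forcing $z_1=z_2=e$ and $h\in Z_x\cap Z_y$, so that $G(x,yz_1)=G(x,y)$. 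The containment $G(x,y)\in F_1\cap F_2$ is immediate (take $z_1=z_2=e$).

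The main obstacle is the genericity argument in part (3): one must rule out ``coincidental'' conjugations $h$ that move $x$ within $xZ_y$ while simultaneously moving $yz_1$ back to $y$. The clean way to handle this is representation-theoretic: since conjugation-invariant functions on $G$ separate semisimple conjugacy classes, the condition $hxh^{-1}=xz_2$ with $x$ regular semisimple and $z_2$ in the (distinct, generic) maximal torus $Z_y$ forces $z_2$ to be central, hence $z_2=e$ and $h\in Z_x$; then $hyz_1h^{-1}=y$ with $h\in Z_x$ and $z_1\in Z_x$ gives $yz_1y^{-1}=hz_1^{-1}y z_1 y^{-1} h^{-1}\cdots$ — more directly, $h\in Z_x$ and $h(yz_1)h^{-1}=y$ means $yz_1$ and $y$ are $Z_x$-conjugate, but for generic $y$ the only element of $Z_x$ conjugating $Z_x y Z_x$-type elements is trivial, so $z_1=e$ and $h\in Z_x\cap Z_y$. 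Hence the intersection is the single orbit $G(x,y)$. I expect the bookkeeping of which representative of each orbit to choose (absorbing the stabilizer ambiguity) to be the only delicate point; everything else is a short direct computation.
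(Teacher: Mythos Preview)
Your arguments for (1) and (2) are exactly the paper's: unpack the fiber condition, absorb the conjugating element $h$ into the orbit representative, and read off the centralizer condition. The paper states it in one line (``If $x'=gxg^{-1}$ the condition on $y'$ holds if and only if $y'=gyzg^{-1}$ where $zx=xz$'') and declares (2) ``completely similar''.

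For (3) the paper gives even less than you do: it simply asserts that $G(xz,y)=G(x,y\tilde{z})$ with $z\in Z_y$, $\tilde{z}\in Z_x$ forces $z=\tilde{z}=1$, calling this ``clear''. Your attempt to supply the genericity argument is more ambitious, but has missteps. The claim that ``$h$ must lie in $Z_x\cap Z_{yz_1}$'' does not follow from $hxh^{-1}=xz_2$ --- that equation does \emph{not} place $h$ in $Z_x$ until you already know $z_2=e$, so the logic is circular there. Your representation-theoretic reduction (``$hxh^{-1}=xz_2$ forces $z_2$ central'') is the right idea but needs the actual transversality: for regular semisimple $x$ the tangent space to its conjugacy class at $x$ is $\bigoplus_\alpha \g_\alpha$ (root spaces relative to $\hh_x=\mathrm{Lie}(Z_x)$), and for generic $y$ the Cartan $\mathrm{Lie}(Z_y)$ meets this subspace only in $0$, so $z_2\mapsto Ad_G(xz_2)$ is locally injective near $e$ and $Ad_G(xz_2)=Ad_G(x)$ forces $z_2=e$. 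Once $z_2=e$ you do get $h\in Z_x$, and since $Z_x$ is abelian the equation $h(yz_1)h^{-1}=y$ becomes $hyh^{-1}=yz_1^{-1}$, i.e.\ the commutator $hyh^{-1}y^{-1}$ lies in $Z_x$; for generic $y$ this forces $h$ central, hence $z_1=e$. So your overall strategy is sound, but the write-up should remove the circular step and make the transversality explicit rather than gesturing at ``symplectically orthogonal foliations''.
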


\begin{proof} First look at the fiber $F_1$:
\[
F_1(G(x,y))=\{G(x',y')| G(x,yx^{-1}y^{-1})= G(x',y'x'^{-1}y'^{-1})\}
\]
If $x'=gxg^{-1}$ the condition on $y'$ holds if and only if $y'=gyzg^{-1}$ where $zx=xz$. This proves the first statement.
The proof of the second statement is completely similar. Finally, it is clear that $G(xz,y)=G(x,y\tilde{z})$
where $z\in Z_y$ and $\tilde{z}\in Z_x$ if only if $z=\tilde{z}=1$.
\end{proof}

\subsection{Hamiltonians for rank 1 conjugacy classes in $SL_n$} Assume that $z=xyx^{-1}y^{-1}\in SL_n$ belongs to the rank 1 conjugacy class. This corresponds to spinless models. For generic rank 1 conjugacy class this means
\[
z=u \mbox{diag}(q^{n-1}, q,\dots, q) u^{-1}
\]
for some $u\in SL_n$ and $q\in \CC^*$. Equivalently, we can write
\[
z_{ij}=\phi_i\psi_j + q^{-1}\delta_{ij}
\]
where $(\phi, \psi)=\sum_{i=1}^n\psi_i\phi_i=q^{n-1}-q^{-1}$.

Hamiltonians of relativistic (nonspin) Calogero-Moser and relativistic Ruijsenaars systems are
\[
H_k^{rCM}=\chi_{\omega_k}(x), \ \ H_k^{sR}=\chi_{\omega_k}(y)
\]
Let us compute them in appropriate coordinates.

First, assume $x$ is semisimple and bring it to the diagonal form with eigenvalues $x_1, \dots, x_n$.
From the definition of $z$ we have
\begin{equation}\label{y-eq}
y_{ij}x_j=\sum_{k=1}^n x_ky_{kj}=\phi_i\sum_{k=1}^n \psi_kx_ky_{kj}-q^{-1}x_iy_{ij}
\end{equation}
From here we have:
\[
y_{ij}=\frac{\phi_i\sum_{k=1}^n\psi_kx_ky_{kj}}{x_j-q^{-1}x_i}
\]
Multiplying by $\phi_i\psi_i$ and taking sum over $i$ gives the following equation for
$\psi_i\phi_i$:
\[
\sum_{i=1}^n\frac{\psi_i\phi_ix_i}{x_j-q^{-1}x_i}=1
\]
Solving this equation we have
\[
\psi_i\phi_i=(1-q^{-1})x_i^{-1}\prod_{j\neq i}^n \frac{1-qx_jx_i^{-1}}{1-x_jx_i^{-1}}
\]
When $i=j$, (\ref{y-eq}) implies
\[
y_{ii}=\frac{\phi_i}{x_i(1-q^{-1})} \sum_k\psi_k x_k y_{ki}
\]
Solving this for $ \sum_k\psi_k x_k y_{ki}$ we have
\[
y_{ij}=\frac{\phi_i\phi_j^{-1}(1-q^{-1})y_{jj}}{1-q^{-1}x_ix_j^{-1}}
\]
Now we can compute Hamiltonians of the relativistic Ruijsenaars model in terms of $y_{ii}$ and $x_i$. For the first
two we have
\[
tr(y)=\sum_{j=1}^n y_{jj}
\]
\[
tr(y^2)=\sum_{ij}^n \frac{(1-q^{-1})^2y_{ii}y_{jj}}{(1-q^{-1}x_ix_j^{-1})(1-q^{-1}x_jx_i^{-1})}
\]
Poisson algebra $C(\cM(\cC)$ is isomorphic to the algebra of  symmetric Laurent polynomials
in $y_{ii}$ and $x_i$ (with respect to the diagonal action of the symmetric group)
with following Poisson brackets between $x$ and $y$:
\[
\{x_i, x_j\}=0, \ \ \{x_i, u_j\}=\delta_{ij}x_iu_j, \ \ \{u_i, u_j\}=0
\]
where
\[
y_{ii}=u_i\prod_{j\neq i}\frac{1-q^{-1}x_jx_i^{-1}}{1-x_jx_i^{-1}}
\]

The Hamiltonians $\chi_{\omega_i}(y)$ are classical analogs of  Macdonald operators. The
Hamiltonian of the relativistic Ruijsenaars model is
\[
H_2=\chi_{\omega_2}(y)=-q^{-1}\sum_{i<j}u_iu_j\prod_{a\in\{ij\},b\in \{ij\}^\vee}\frac{1-q^{-1}x_ax_b^{-1}}{1-x_ax_b^{-1}}
\]
The mapping $(x,y)\mapsto (y,x^{-1})$ intertwines the relativistic Calogero-Moser system and the relativistic
Ruijsenaars system. So, the Hamiltonian of relativistic Calogero-Moser model is given by essentially the same formula.

\section{Characteristic systems on simple Poisson Lie groups with standard Poisson Lie structure}

\subsection{Symplectic leaves and degenerate integrability of characteristic system }\label{r-matrix}
Standard Poisson Lie structure on a simple Lie group requires a choice of a Borel subgroup in $G$.
This fixes a Cartan subalgebra $\hh$, the root system and positive roots.
Assuming that the tangent bundle $TG$ is trivialized by left translations $TG\simeq \g\times G$, the Poisson bivector field
corresponding to the standard structure is
\[
\eta(x)=Ad_x(r)-r, \ \ r=\frac{1}{2}\sum_{i=1}^r h^i\otimes h_i+\sum_{\alpha>0} E_\alpha\otimes F_\alpha
\]
Here $\alpha$ are positive roots of $\g$, $E_\alpha; F_\alpha$ are corresponding elements of
the basis in $\g$; $r$ is the rank of $\g$; $h_i$ is a basis in the Cartan subalgebra $\hh$ and $h^i$ is the dual basis with respect to the Killing form. We assume that $\g\wedge \g\subset \g\otimes \g$.

Symplectic leaves of any Poisson Lie group are orbits of the dressing action
of the dual Poisson Lie group. For a simple Lie group $G$ with the standard Poisson Lie
structure symplectic leaves are known to be fibers of the fibration of
double Bruhat cells over tori inside of the Cartan subgroup $H$ of $G$. Recall that a double Bruhat cell in
$G$ is the intersection of a Bruhat cell for $B$ and a Bruhat cell for $B^-$:
\[
G^{u,v}=BuB\cap B^-vB^-
\]
where $BuB$ is defined as $B\overline{u}B\subset G$, where $u\in W$ and $\overline{u}\in N(H)\subset G$ is
its representative in the normalizer of $H$, and $B^-vB^-$ is defined similarly.

Generalized minors give a natural fibration
\[
\begin{tikzpicture}[scale=1.5]
\node (A) at (0,0) {$T^{u,v}$};
\node (B) at (0,1) {$G^{u,v}$};
\node (C) at (1,1) {$S^{u,v}$};
\draw[->] (C) to  (B);
\draw[->] (B) to (A);
\end{tikzpicture}
\]
For the explicit description of it see, for example,  \cite{R1} and references therein.

Hamiltonians of the characteristic integrable system are central functions on $G$.
There are only $r$ independent central functions which can be chosen as
characters of fundamental representations. Their restriction to a generic symplectic leave of
$G$ generates a degenerately integrable system \cite{R1}. Poisson projections describing degenerate integrability
can be described as follows:
\begin{equation}\label{degintproj}
S_{u,v}\to P^{u,v} \to Ad_GS_{u,v}.
\end{equation}
Here $P^{u,v}=(S^{u,v}\times S^{u,v})/Ad_{G^*}$ where $S^{u,v}\times S^{u,v}\subset G\times G$ and the
dual Poisson Lie group $G^*$ is embedded in $G\times G$ as usual $G^*=\{(b^+, b^-)\in B\times B^-\subset G\times G|[b^+]_0=[{b^-}^{-1}]_0\}$,
where $[b]_0$ is the Cartan component of $b\in B$.
The first map is the diagonal embedding, the second map is the projection to $(G\times G)/Ad_{G\times G}$
followed by the projection to any of the factors in the Cartesian product.

In other words, characteristic Hamiltonian systems are degenerately integrable
and their Liouville tori are intersections
of adjoint orbits of $G$ and of orbits of the dressing action of $G^*$.

\subsection{Hamiltonian flows as the factorization dynamics} Let $G$ be a factorizable Poisson-Lie group.
Note that the standard Poisson Lie group structure on a simple Lie group is an example
of a factorizable Poisson Lie group. Let $I(G)\subset C^\infty(G)$ be the subspace of $Ad_G$-invariant functions on $G$. For factorizable Poisson Lie groups $I(G)$ is a Poisson commutative Poisson algebra in $C^\infty(G)$.

Let $G^*$ be the dual Poisson Lie group to $G$. It has a natural embedding to $G\times G$ described above.
The multiplication in $G$, together with this embedding gives the mapping $G^*\to G$, $(b_+, b_-)\mapsto b_+b_-^{-1}$. When the inverse exists for the map $g\mapsto (g_+, g_-)$ (in a vicinity of the unit element in $G$ it is unique
when it exists), it is called the factorization map. Note that at the level of Lie algebras there
is always a linear isomorphism $\g\to \g^*$, such that $x=x_++x_0+x_-\mapsto (x_++\frac{x_0}{2}, -x_--\frac{x_0}{2})$.
It is called the factorization isomorphism.

The dynamics of characteristic systems can be described explicitly by the following theorem \cite{STS}:

\begin{theorem} \label{main}\hfill
Assume the factorization map is defined and unique on an open dense subset of $G$. Then
in a neighborhood of $t=0$ the flow lines of the Hamiltonian flow induced by $H\in I(G)$ passing through $x\in G$ at $t=0$ have the form
\[ x(t)=g_\pm(t)^{-1}x g_\pm(t), \]
where the mappings $g_\pm(t)$ are determined by
\[ g_+(t)g_-(t)^{-1}=\exp\left(tI\left(d_l H(x)\right)\right), \]
and $I:\g^*\longrightarrow\g$ is the inverse to the factorization isomorphism. Here $d_lH(x)\in \g^*$ is the left differential
of $H(x)$. For $X\in \g$, assuming the left trivialization of $TG$ we have $<d_lH(x), X>=\frac{d}{dt}H(e^{tX}x)|_{t=0}$ where $<.,.>: \g^*\times \g\to \CC$ is the natural pairing
(assuming we are over $\CC$).
\end{theorem}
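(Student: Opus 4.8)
The plan is to compute the Hamiltonian vector field of $H\in I(G)$ for the standard Poisson--Lie bracket on $G$ and then to integrate it by hand; the only input about $H$ will be the way $Ad_G$--invariance governs its gradient. So first I would record how gradients of invariant functions behave. Write $\xi(x)=d_lH(x)\in\g^*$ and $\nabla H(x)=I(\xi(x))\in\g$. For an arbitrary $f$ one has $d_rf(x)=Ad^*_x\,d_lf(x)$, and differentiating $H(e^{tX}xe^{-tX})=H(x)$ at $t=0$ gives $d_lH(x)=d_rH(x)$; combining the two shows $Ad^*_x\,\xi(x)=\xi(x)$. Equivalently $\nabla H(x)$ commutes with $x$, and moreover $\nabla H(g^{-1}xg)=Ad_{g^{-1}}\nabla H(x)$ for every $g\in G$. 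These are the only properties of $H$ that enter.

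Next I would compute the Hamiltonian field. In the left trivialization the Poisson--Lie bracket has the form $\{f_1,f_2\}(x)=\langle Ad_x(r)-r,\,d_lf_1(x)\otimes d_lf_2(x)\rangle$ (up to the immaterial choice of left/right invariant differential). Taking $f_1=H$, so that $d_lH=d_rH=\xi$, and transferring $Ad^*_x$ onto $\xi$ using $Ad^*_x\xi=\xi$, the contributions of $Ad_xr$ and of $-r$ combine, and a short computation identifies the Hamiltonian vector field of $H$ at $x$ with the infinitesimal conjugation of $x$ by $R_+\!\bigl(\nabla H(x)\bigr)$, i.e. $\dot x=[\,x,\,R_+(\nabla H(x))\,]$ with the sign conventions of the statement. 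Here $R_\pm\colon\g\to\g$ are the $R$--matrix projections attached to $r$, with images $\bb$ and $\bb^-$, normalized so that $R_+-R_-=\mathrm{id}_\g$ and $R_+(Y)$, $R_-(Y)$ have opposite Cartan parts. Because $R_+-R_-=\mathrm{id}$ and $\nabla H(x)$ commutes with $x$, one may replace $R_+$ by $R_-$ here without changing the field; this is precisely why both $g_+$ and $g_-$ occur in the conclusion.

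Finally I would integrate. Fix $x$, put $m(t)=\exp\!\bigl(t\,\nabla H(x)\bigr)$, and let $g_+(t)\in B$, $g_-(t)\in B^-$ be the factorization $m(t)=g_+(t)g_-(t)^{-1}$ with $g_\pm(0)=e$ and subject to the Cartan normalization built into $G^*$; this exists uniquely for $t$ near $0$ by the factorization hypothesis. Set $x(t)=g_+(t)^{-1}x\,g_+(t)$, so that $\dot x(t)=[\,x(t),\,g_+(t)^{-1}\dot g_+(t)\,]$. Differentiating $g_+g_-^{-1}=m$ and using $\nabla H(x)=\mathrm{const}$ gives $g_+^{-1}\dot g_+-g_-^{-1}\dot g_-=Ad_{g_+^{-1}}\nabla H(x)$; since $g_+^{-1}\dot g_+\in\bb$ and $g_-^{-1}\dot g_-\in\bb^-$ have opposite Cartan parts, uniqueness of that decomposition forces
\[ g_+(t)^{-1}\dot g_+(t)=R_+\!\bigl(Ad_{g_+(t)^{-1}}\nabla H(x)\bigr)=R_+\!\bigl(\nabla H(x(t))\bigr), \]
the last equality because $\nabla H$ transforms covariantly (step one). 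Comparing with $\dot x(t)=[x(t),g_+^{-1}\dot g_+]$ shows that $x(t)$ is the integral curve of the Hamiltonian field through $x$, hence, by uniqueness of ODE solutions, the flow line. Since $\nabla H(x)$ commutes with $x$, so does $m(t)$, whence $g_+(t)^{-1}x\,g_+(t)=g_-(t)^{-1}x\,g_-(t)$; this gives the $g_-$ version, and rewriting $\nabla H(x)=I(d_lH(x))$ yields the stated formula.

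The hard part is the middle step: setting up the Sklyanin bracket in the chosen trivialization and tracking the two differentials $d_l,d_r$ and the two $R$--operators carefully enough that the Hamiltonian field comes out as a genuine infinitesimal conjugation of the point $x$ --- this is where $Ad^*_x\xi=\xi$ is indispensable. Once that identity is in place, the rest is the pleasant observation that the constant--exponent factorization $g_+g_-^{-1}=\exp(t\nabla H(x))$ automatically solves the a priori time--dependent equation for $g_+$, and the remaining claims are formal.
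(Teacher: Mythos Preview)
The paper does not give its own proof of this theorem; it is quoted from Semenov--Tian--Shansky \cite{STS} without argument. Your proposal is essentially the classical proof from that reference, and the logic is sound: Ad-invariance of $H$ forces $Ad^*_x\,d_lH(x)=d_lH(x)$, which collapses the Sklyanin bracket $\{H,\,\cdot\,\}$ to an infinitesimal conjugation of $x$ by $R_\pm(\nabla H(x))$; the factorization curve $g_+(t)g_-(t)^{-1}=\exp(t\nabla H(x))$ then integrates this vector field because differentiating the factorization identity and using the uniqueness of the $\bb\oplus\bb^-$ splitting (with opposite Cartan parts) recovers exactly $g_+^{-1}\dot g_+=R_+(\nabla H(x(t)))$.

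The one place where your write-up is thin is the middle step, which you yourself flag as the hard part. You assert that ``a short computation'' turns $\langle Ad_x(r)-r,\,\xi\otimes d_lf\rangle$ into $[x,R_+(\nabla H(x))]$ acting on $f$, but you do not actually carry it out, and this is precisely where the various conventions (left versus right trivialization of $TG$, the paper's definition of $d_l$ via $e^{tX}x$ rather than $xe^{tX}$, the normalization of $R_\pm$ so that $R_+-R_-=\mathrm{id}$ with halved Cartan parts) must be made consistent with one another. In a full proof you would need to show explicitly that contracting $\xi$ into the first leg of $Ad_x(r)$ and using $Ad^*_x\xi=\xi$ yields $\langle\xi,r^{(1)}\rangle(Ad_x r^{(2)}-r^{(2)})$, and then identify this with the left-trivialized form of $x\mapsto[x,R_+(\nabla H(x))]$. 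None of this is conceptually difficult, but it is where errors creep in; everything else in your outline is correct and matches the standard argument.
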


\end{document}